\newtheorem{prop}{Proposition}
\newtheorem{thm}{Theorem}
\newtheorem{cor}{Corollary}
\theoremstyle{definition}
\def\BState{\State\hskip-\ALG@thistlm}
\DeclarePairedDelimiter\floor{\lfloor}{\rfloor}
\begin{document}
	
	\title{Local Relay Selection in Presence of Dynamic Obstacles in Millimeter Wave D2D Communication
		\thanks{\noindent Durgesh Singh and Sasthi C. Ghosh are with the Advanced Computing \&  Microelectronics Unit, Indian Statistical Institute, Kolkata 700108, India. Email: durgesh.ccet@gmail.com, sasthi@isical.ac.in.}
		\thanks{\noindent Arpan Chattopadhyay is with the Department of Electrical Engineering, Indian Institute of Technology Delhi. Email: arpanc@ee.iitd.ac.in.}
		\thanks{\noindent This work was supported by the faculty seed grant and professional development allowance (PDA) of IIT Delhi.}
	}
	\author{\IEEEauthorblockN{
		Durgesh Singh, Arpan Chattopadhyay \& Sasthi C. Ghosh}
	}
	\maketitle

	\begin{abstract}	
		Blockage due to obstacles in millimeter wave (mmWave) device to device (D2D) communication  is a prominent problem due to their severe penetration losses. Potential user equipments (UEs) in vicinity of the source UE must be explored in order to select a new relay when the current link gets blocked. However, dynamic obstacles are not known in advance and thus may cause unpredictable fluctuations to D2D channel quality causing newly selected relay link also to be susceptible to blockage. This might cause frequent relay  switching leading to call drops and high energy consumption. We have proposed the idea of reducing frequency in relay exploration and switching and thus average end-to-end delay (in seconds) at the expense of additional exploration time units (few milliseconds) during beam alignment. We seek to learn the uncertainty in D2D link qualities by modeling the problem as finite horizon partially observable Markov decision process (POMDP) framework locally at each UE. We have derived an optimal threshold policy which maps the state to set of actions. We then give a simplified and easy to implement stationary threshold policy which counts the number of successive acknowledgment successes/failures for making decisions of selecting or not selecting a given relay locally. Through extensive simulation, we validate our theoretical findings and demonstrate that our approach captures the trade-off between average exploration time and average end-to-end (E2E) delay in presence of dynamic obstacles.
	\end{abstract}
		\begin{IEEEkeywords}
		Relay selection, Millimeter wave \texttt{D2D} communication,  Exploration, Dynamic Obstacles, POMDP.
		\end{IEEEkeywords}
	
	\IEEEpeerreviewmaketitle
	
	\section {Introduction}
	Device to device (D2D)  communication enables proximity devices or user equipments (UEs) to directly communicate with one another bypassing the base station (BS) \cite{8618386}. High available bandwidth and short range transmission of millimeter wave (mmWave) is a lucrative choice  for D2D communication \cite{7010536,8014297,8472783}. Although very high propagation losses of mmWave are compensated by transmitting directional beams using  antennas arrays, severe penetration loss of mmWave make it susceptible to very high blockage from various obstacles \cite{7010536,8047278,8941039}. Obstacles  may block the mmWave signal completely,  thus needing an almost line of sight (LOS) path. Additionally, the presence of dynamic obstacles might deteriorate D2D channel condition rapidly \& abruptly causing unprecedented link breakage and hence  packet loss, delay and high energy consumption. 
	
	In case of blockage by dynamic obstacle, a new relay is chosen by exploring new relay links through beam-forming \cite{7744807,8809571} which are in the source UE's vicinity. This is a directional search of new relay and has a considerable delay. A detailed techniques for beam-alignment and beam-management under various  scenarios can be found in \cite{8458146}. The new relaying UE must be chosen carefully  during exploration time (ranging from few microseconds to $10~ms$ \cite{7744807}  for short range), because it may get blocked during data transmission time (100-1000 times higher than exploration time) due to presence of dynamic obstacles, even when the source and relay beams are perfectly aligned to achieve highest data rate as shown in figure \ref{fig_problem}. This might lead to frequent relay exploration and switching which causes increased delay, energy consumption and probably outage leading to call drops. In addition, exchange of channel state information with the BS might cause extra delay \cite{8302856} since the D2D channel is not directly visible to the BS. Hence the decision to explore and select a relay must be made locally.
    \begin{figure}[h!]
		\centering
		\includegraphics[width=0.30\textwidth]{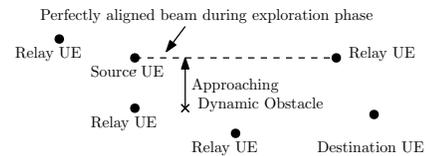}
		\caption{Perfectly aligned beam of source-destination during exploration susceptible for blockage due to dynamic obstacles.}
		\label{fig_problem}
	\end{figure}
	
	Most of the works \cite{6932503,p8_7510705,p14_7450161,8292574} deal with static obstacles. However, dynamic obstacles are unpredictable which must be captured and their effects taken into account. The authors in \cite{8941039} showed a significant drop in data rate when a pedestrian act as a blockage. The authors in \cite{8457255,8990741} used radar to capture the movement of UEs and obstacles. Vision cameras and machine learning (ML) techniques were used by the authors in \cite{8472783,8930580,8941039,9129369} for tracking obstacle's spatio-temporal behavior. However all these require either expensive hardware or high processing time and energy usage. These might be apt at the BS but not at the UE which needs a local solution in an online fashion. The ML based solution in \cite{8930580} might need re-training when there is change in the environment. Moreover, it requires that the link breakage event must follow some well defined pattern with some known distribution. However, due to the presence of obstacles, variations in link quality behavior is abrupt. So these factors may not be perfectly modeled as also argued in \cite{7959158}, thus requiring an online approach. The authors in \cite{8472783} mentioned an online ML technique at the BS which learns about the dynamics of the environment for enabling beam training  to prevent blockages. However in this work we aim to learn the dynamics of the environment at the UE locally in a timely manner. To account for the dynamic obstacles locally, partially observable Markov decision process (POMDP) \cite{7080987,7996366,dp_book}  can be used in modeling the variations in unobserved D2D links. The authors in \cite{9148816}  utilized POMDP to derive an optimal policy which tells the UE locally when to go for exploration after suspending communication on current relay link after successive packet losses. Whereas, in contrast, this work deals with selecting an appropriate relaying UE during the exploration phase locally. The probabilistic model and cost structure are also accordingly different in this work.

	In our work, we have investigated the idea of reducing frequency in relay switching and thus average end-to-end (E2E) delay at the expense of additional exploration time during beam alignment. We have modeled the problem as a finite horizon POMDP. The states are the D2D relay link qualities which are not observable at the current time instant. It can only be observed after receiving the acknowledgements (ACKs) of the probe packets which are sent in order to access a link.  \textit{Even the ACKs can get lost due to presence of dynamic obstacles}.  Information about dynamic obstacles are \textit{not} known at BS a priori and it can only be learned through ACKs of probe packets.  The goal is to take decision of whether selecting or not selecting a relay which minimizes packet loss and in-turn delay by sending additional exploration probe packets to learn the channel quality. Optimal threshold policies have been derived which maps the  belief to a set of actions.  By exploiting the derived policy structure, we have obtained a stationary policy which tells the UE during exploration that after how many successive ACK successes  or ACK  failures to take the decision  of whether selecting or not selecting the relay respectively. Theoretical analysis is validated through extensive simulation. Our major contributions in this paper are summarized as follows:
	\begin{enumerate}
		\item We modeled relay selection problem during the exploration time locally considering the presence of dynamic obstacles as a POMDP. This model can be applied to a scenario without explicitly knowing the dynamic obstacle's distribution.
		\item The model learns the channel quality in an online fashion using ACKs which can also get lost. The model does not require to undergo a training  phase and hence suitable when there is a  change in the environment.  
		\item  We showed that the optimal policy is threshold type policy. This is a non-trivial result that required proof of several interesting intermediate results. Our optimal policy can be implemented locally at each UE, thereby facilitating distributed implementation.
		\item The threshold policy for the problem is further reduced to counting the number of successive ACK successes or ACK failures, which is simple and easy to implement.
	\end{enumerate}


	\section{System Model} \label{system_model}
	We are considering the device-tier of 5G D2D architecture mentioned in \cite{tehrani2014device}, where devices or UEs can communicate among themselves with or without the help from BS. The service region is discretized into various \textit{zones} or \textit{grids} as shown in figure \ref{fig_grid_time}(a) with one BS. Each zone may have many UEs and is assumed to have at least one D2D device which is ready to take part in D2D communication as a relay or source/destination node. 
	A zone $i$ containing the source UE may form connection to a UE of another zone $j\in \mathbb{U}^i$ , where $\mathbb{U}^i$ is the \textit{viable} relay zones of the zone $i$ which is given by the BS. A viable relaying zone of zone $i$ is one which is nearer to the zone containing the destination UE and is in the communication range of the zone $i$. When the source UE in the fixed zone $i$ forms a connection with another UE of zone $j\in \mathbb{U}^i$, then the link formed between zones $i$ and $j$ is termed as link $j$.	Link is formed between UEs of two zones when they are in communication range of each other and the received signal strength is sufficient for the required data rate.
	Each UE can communicate with one another on mmWave channels using directional antennas. The received signal strength ($Q_{ij}$) on zone $j$ from zone $i$  is modeled as \cite{p14_7450161}:
	\begin{equation}
	Q_{ij}=\mu \cdot P \cdot G_t\cdot G_r \cdot PL_{ij}
	\end{equation}
	where, $\mu$ is the shadowing random variable, $P$ is the transmit power of the UE, $G_t$ \& $G_r$ are transmit and receive beam-forming gains respectively. $PL_{ij}$ is the distance dependent path loss function between zone $i$ and zone $j$.

	Time is discretized  as $(nN+l)\delta$ as shown in figure \ref{fig_grid_time}(b), where $n$ belongs to set of nonnegative integers, $l$ takes integer values in $[0,N-1]$, $\delta$ is the smaller discretized time slot when the UEs transmit packets locally. It is assumed that $\delta$ (for each $l\in[0,N-1]$) is large enough to send one packet of size $L$ bytes. Here, $N$ is the number of time slots (of $\delta$ duration) between two consecutive global decisions by the BS. Global decision by BS is made at time when $nN+l$ is divisible by $N$. At this time instant BS takes the channel state information from all UEs in the service region and gives the decision of best relaying UE of a given zone for a given source UE. Hence, in between two consecutive time instants when BS can make global decision, a UE can send at-most $N$ packets of size $L$ to another UE. Note that at time $l=0$, the UE chooses the relay link suggested by the BS and at time  $l\in \{1,2,\cdots,N-1\}$, UEs do not get channel state information from the BS.  At global time instants, BS sends two types of information to UEs, i) the best relaying UE for a given source UE and ii) viable relaying zones $\mathbb{U}^i$ for given source zone $i$, hence the zone $i$ may choose  an appropriate zone for relaying data from the set $\mathbb{U}^i$ by undergoing exploration.
	\begin{figure}[h!]
		\centering
		\includegraphics[width=0.44\textwidth]{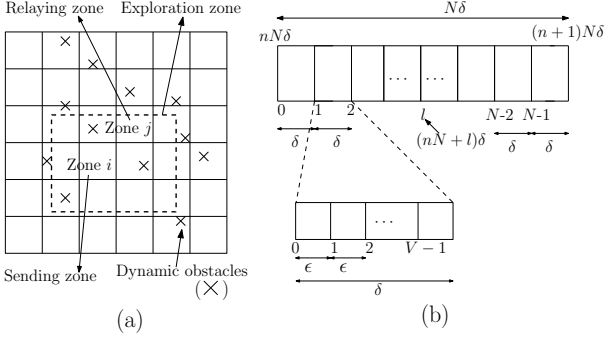}
		\caption{(a) Service region divided into zones along with dynamic obstacles. (b) Discretized time slots with exploration time unit.}
		\label{fig_grid_time}
	\end{figure}

	\textbf{Exploration}: When the current link quality is not good enough then source UE locally explores for an appropriate relaying UE from given viable set. One of the basic procedure in the exploration phase involves sequentially searching space in all directions to align transmitter and receiver beams (beam alignment). The quality of link is piggy-back to the sending UE. This overall process of beam alignment is termed as the exploration phase. Any of the state of art approaches mentioned in \cite{8458146} can be applied for the exploration phase. We have considered an abstraction of this phase and thus it is considered as a black box. It is assumed that the beams are perfectly aligned after the exploration phase and thus we focus on the effects of dynamic obstacle's blockage on a given D2D link. This exploration by source UE in zone $i$ is done for UEs belonging only in the set $\mathbb{U}^i$ to find out the best relaying zone for that time instant. Note that the UE is using directional mmWave antennas for exploring the neighbors and hence explorations cause some significant delay. Each exploration time duration is denoted by $\epsilon < \delta$ as shown in figure \ref{fig_grid_time}(b). In a given duration $\delta$, we can perform a number of explorations. Each of the time instants during exploration time of duration $\epsilon$  is denoted by $m\in\{0,1,\cdots,V-1\}$, where $V=\floor*{\delta/\epsilon}$.
	The overall exploration time is bounded by the maximum value $M$ which can be more than $V$. It is assumed that once exploration is complete,  switching takes negligible time.

	There are static and dynamic obstacles in the service region. There is \textit{no} external facility like radars or vision cameras available at BS to track them. The behavior of dynamic obstacles  are not known a priori and need to be learned from the received ACKs of probe packets during exploration phase in an online fashion. Since mmWaves are highly susceptible to obstacles and suffer from severe penetration losses, we assume that even a single moving or static obstacle may break an already established D2D link and can cause packet loss. 

	\section{Problem Formulation as POMDP} \label{problem_formulation}
	Zone $i$ containing source UE forms a link with an UE (relay or destination) of some other zone $j\in \mathbb{U}^i$. Global decision for the best relay is given by the BS at the time instant $nN\delta$ to relay data packet till $(n+1)N\delta$ time instant. The quality of link given by the BS may deteriorate due to presence of dynamic obstacles resulting in link outage and a new relay link needs to be explored from the viable set $\mathbb{U}^i$ locally. A newly found relay link might provide very good data rate initially during exploration phase, but it may gradually deteriorate in future data transmission time because of dynamic obstacles. This problem is subtle due to the fact that exploration time is very small (ranging from few microseconds to $10~ms$ \cite{7744807}) in contrast to the data transmission time which is in order of seconds (usually 100-1000 times more than search time). The main idea is to reduce the E2E delay on the cost of sending additional probe packets till decision of whether selecting or not selecting a given relay for data transmission considering dynamic obstacles can be learned. This process is limited upto  maximum $M$ exploration time units where each exploration time duration is of $\epsilon$ time. If the given relay is not selected within this time then the exploration process again starts for another possible relaying UE in a zone belonging to $\mathbb{U}^i$.
	
	We model this problem of relay selection during  exploration as that of POMDP considering uncertainty in relay link quality. We will define the POMDP as follows: For all the possible links $j\in \mathbb{U}^i$, the state is written as $y_{m}^j \in\{0,1\}$ for exploration time instant $m$. Values $1$ and $0$ signify if relay link $j$ is in  good or bad state respectively. Good and bad notion denotes that whether link will be formed successfully or not respectively. State represented by $G$ and  $\overline{G}$ respectively denote the good and bad state as shown in figure \ref{fig_prob_2}. The action set is defined as \{do not select relay link,  select relay link, decision cannot be made thus continue sending probe packets\}.  The first action signifies that the current link is bad, hence stop exploring it and start a new exploration. The second action signifies that the current link being explored is of good quality, hence choose it for data transmission. The third action signifies that the decision of relay selection cannot be made and exploration should continue for another $\epsilon$ unit of time to learn the relay link quality. The action is denoted by $a_m^j$ for link $j$ at time instant $m$. The link quality is observed by the ACKs of the received probe packet signal. The  ACK test for link $j$ at time $m$ is denoted as $w_{m}^j\in\{0,1\}$ for a given state $y_m^j$ and action $a_m^j$, may also be uncertain due to the unpredictable channel condition.  In figure \ref{fig_prob_2}, $A$ and $\overline{A}$  denotes whether ACKs are received successfully or not respectively.
	  
	Figure \ref{fig_prob_2} represents the probabilistic structure of the problem as shown below.
	\[P(w_m^j=1|y_{m}^j=1)=k; P(w_m^j=0|y_{m}^j=1)=1-k\]
	\[P(w_m^j=1|y_{m}^j=0)=g; P(w_m^j=0|y_{m}^j=0)=1-g\]
	However, note that the ACK piggyback in current time instant will show quality of the channel for probe packet sent in previous time instant. Hence a transition probability of $g$ is introduced above. 
	The probabilistic structure assumed for the system state transition is given below. We have assumed that $q>s$ and $k>g$ which signify respectively that the probability of a link becoming good from previous good state is higher than that of previous bad state and probability of successful ACK from good state is higher than that of bad state.
	\[P(y_{m+1}^j=1|y_{m}^j=1)=q; 	P(y_{m+1}^j=0|y_{m}^j=1)=1-q\]
	\[P(y_{m+1}^j=1|y_{m}^j=0)=s; 	P(y_{m+1}^j=0|y_{m}^j=0)=1-s\]
	\begin{figure}[h!]
		\centering
		\includegraphics[width=0.25\textwidth]{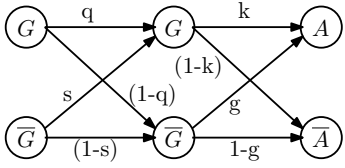}
		\caption{Probabilistic structure of the exploration problem at a UE locally.}
		\label{fig_prob_2}
	\end{figure}
	For a given relaying zone $j$, let us define the information vector available locally to the zone $i$ till time instant $m$ as $H_m^j=(w_0^j,w_1^j,\cdots,w_m^j)$. The sufficient statistics or belief \cite{dp_book} (chapter 5) for this problem is defined locally for relaying zone $j$ as:
	\begin{equation}
	r_m^j=P(y_m^j=1|H_m^j)
	\end{equation}
	This equation gives the probability that a relaying link/zone is in good state given the previous history of information. The estimator function for the local system can be defined as:
	\begin{equation}
	r_{m+1}^j=\Phi'(r_m^j,w_{m+1}^j).
	\end{equation}
	Using Baye's rule we get,
	\begin{equation} \label{update_rule_2}
	r_{m+1}^j = \begin{cases}
	\frac{(qr_m^j+(1-r_m^j)s)k}{H}, & \text{if } w_{m+1}^j=1\\
	\frac{(qr_m^j+(1-r_m^j)s)(1-k)}{J} , & \text{if }w_{m+1}^j=0
	\end{cases}	
	\end{equation}
	where,
	 \[H=(qr_m^j+(1-r_m^j)s)k+(r_m^j(1-q)+(1-r_m^j)(1-s))g\]
	  \[J=(qr_m^j+(1-r_m^j)s)(1-k)+(r_m^j(1-q)+(1-r_m^j)(1-s))(1-g)\]
	
	\textbf{Cost structure}: The cost for stopping the exploration for both cases when link cannot be and can be formed is $0$. However, if the link which seems to be good and selected, may become bad in upcoming time instants and thus causes packet loss.  So it will incur $D_1$ cost to compensate for the packet loss. Similarly, the link which was bad and not selected could have been a good relay link in upcoming time instants. Then it will incur some cost $D_2$ in order to compensate for further exploration.  If decision of relay selection cannot be made in current time instant, then probe packets are sent for another $\epsilon$ time unit to continue the exploration.  The cost incurred here is $c_\epsilon$. This will go on till time $M$ which is the upper bound on the exploration time for a single link. The objective is to derive a decision criterion, whether to select or not select the given relay link, or if this decision cannot be made then continue learning the current relay link quality. The expected cost is formulated as a dynamic program. At the end of the $(M-1)th$ period, the expected cost is: 
	\begin{equation}
	K_{M-1}^j(r)= \min\{rD_1,(1-r)D_2\} \label{eq_j11}
	\end{equation}
	where $r$ is a variable denoting belief. For the time instant $m=M-2$, we have,
	\begin{equation}
	K_{M-2}^j(r)=\min\{ rD_1,(1-r)D_2,c_\epsilon+\mathbb{E}[K_{M-1}^j(r)] \label{eq_j12}
	\end{equation}
	where $\mathbb{E}[\cdot]$ is the mathematical expectation over the observations. Here the first term in minimization expression denotes the expected cost ($rD_1+(1-r)0$) incurred when the learning of the current link being explored  is stopped since it was in the bad state. The second term denotes the expected cost for selecting the current relay link being explored. The third cost is the expected cost for continuing exploring the current relay link for another round of $\epsilon$ unit of time. We can write a general expression as:
	
	\scriptsize	
	\begin{equation}
	K_{m}^j(r)=\min\{rD_1,(1-r) D_2, c_\epsilon+\mathbb{E}[K_{m+1}^j(\Phi'(r,w))]\} \label{eq_j13}
	\end{equation}\normalsize
	where $w\in \{0,1\}$ is a variable denoting ACKs failure/success. Note that the notation $w_m^j$ defined earlier denotes the ACK failure/success of link $j$ at time instant $m$, whereas $w$ is used to denote a general variable for ACK failure/success. We will derive the optimal policy for this problem in section \ref{derivation_policy_2}. 
	
	\section{Derivation of the Optimal Policy in  Exploration Phase} \label{derivation_policy_2}
	
	\subsection{Properties of $K_m^j(r)$}
	The general expression for time instant $m$ as mentioned in equation \eqref{eq_j13}, can be written equivalently as:
	\begin{equation} 
	K_{m}^j(r)=\min\{rD_1,(1-r) D_2,E_m^j(r)\} \label{eq_dp_2}
	\end{equation}
	where, 
	\begin{multline} \label{eq_j14}
	E_m^j(r)=c_\epsilon+P(w=1|r) K_{m+1}^j(\Phi'(r,1)) +\\ P(w=0|r)  K_{m+1}^j(\Phi'(r,0))
	\end{multline}	
	For notation simplicity we will now remove the superscript $j$ from each of the respective notations, e.g., we will write $E_m^j()$ as $E_m()$. Hence $E_m^j(r)$ can now be denoted as $E_m(r)$. 
	Now equation \eqref{eq_j14} can be further reduced as:
	
	\scriptsize
	\begin{multline} \label{eq_j15}
	E_m(r)=c_\epsilon +\\ \big((qr+(1-r)s)k+(r(1-q)+(1-r)(1-s))g\big)K_{m+1}(\Phi'(r,1)) +\\ \big(1-\big((qr+(1-r)s)k+(r(1-q)+(1-r)(1-s))g\big)\big)K_{m+1}(\Phi'(r,0))
	\end{multline}
	\normalsize
	The expected cost for the base case at the end of  $(M-1)th$ period is mentioned in equation \eqref{eq_j11}. 
	We now show some of the properties of $E_m(r)$.
	\begin{prop} \label{prop_3}
		$E_m(r)$ is piece-wise linear and concave in $r$.
	\end{prop}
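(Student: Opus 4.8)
The plan is to prove this by backward induction on the time index $m$, establishing that $E_m(r)$ is piecewise linear and concave (PWLC) for all $m$. The standard POMDP technique is to show that the entire value function $K_m(r)$ is PWLC, from which the concavity and piecewise linearity of $E_m(r)$ follows, since $E_m$ is a positive-combination-plus-composition built from $K_{m+1}$. The key structural fact I would exploit is that the belief update $\Phi'(r,w)$, though it looks like a nonlinear rational map in $r$ (division by $H$ or $J$), is precisely the map that cancels against the observation probabilities $P(w\mid r)$ when forming the expectation $E_m(r)$. Concretely, in equation \eqref{eq_j15} the denominators $H$ and $J$ in $\Phi'(r,1)$ and $\Phi'(r,0)$ are exactly the factors $P(w=1\mid r)$ and $P(w=0\mid r)$ multiplying the respective $K_{m+1}$ terms. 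This is the crucial cancellation that makes the whole expression tractable.

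First I would set up the base case. At $m=M-1$ we have $K_{M-1}(r)=\min\{rD_1,(1-r)D_2\}$, which is the minimum of two linear (hence concave) functions of $r$, so it is PWLC. This anchors the induction. For the inductive step I would assume $K_{m+1}(r)$ is PWLC, meaning it can be written as $K_{m+1}(r)=\min_{\ell}\{\alpha_\ell + \beta_\ell r\}$ over finitely many linear pieces. Then I would substitute the explicit belief-update formula \eqref{update_rule_2} into $K_{m+1}(\Phi'(r,1))$ and $K_{m+1}(\Phi'(r,0))$ and multiply by the matching observation probabilities $P(w=1\mid r)=H$ and $P(w=0\mid r)=J$. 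Because $\Phi'(r,1)=\frac{(qr+(1-r)s)k}{H}$ has $H$ in its denominator, the product $H\cdot K_{m+1}(\Phi'(r,1))$ becomes $H\cdot\min_\ell\{\alpha_\ell+\beta_\ell\,\tfrac{(qr+(1-r)s)k}{H}\}=\min_\ell\{\alpha_\ell H+\beta_\ell(qr+(1-r)s)k\}$, where each argument is now \emph{linear} in $r$ (since $H$ is affine in $r$ and $qr+(1-r)s$ is affine in $r$). The minimum of finitely many linear functions is concave and piecewise linear; the same holds for the $w=0$ term. Adding the constant $c_\epsilon$ and summing the two concave PWLC pieces preserves the PWLC property, so $E_m(r)$ is PWLC. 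I would then note $K_m(r)=\min\{rD_1,(1-r)D_2,E_m(r)\}$ is the minimum of PWLC functions and hence again PWLC, closing the induction.

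The main obstacle I anticipate is the bookkeeping in verifying the cancellation rigorously and confirming that $H$ and $J$ are genuinely affine in $r$ with the correct signs so that the "multiply-through" step is valid even where $H$ or $J$ could in principle vanish. I would check that $H=(qr+(1-r)s)k+(r(1-q)+(1-r)(1-s))g$ expands to an affine function $H=s k+(1-s)g + r\big((q-s)k+(s-q)g\big)=sk+(1-s)g+r(q-s)(k-g)$, which under the standing assumptions $q>s$ and $k>g$ is strictly positive on $[0,1]$, so the division is well-defined and the boundary behavior is benign. The analogous check for $J=1-(\text{that ACK-success probability})$ being affine and positive would follow. The one conceptual care-point is that concavity must survive the \emph{composition} with an affine map of $r$ inside the min; since an affine reparametrization of the argument preserves concavity and piecewise linearity, and the $H$-weighting is precisely what linearizes the argument, this goes through cleanly. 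I would present the argument for a single inductive step in full and remark that repeating it from $m=M-1$ down to the relevant $m$ completes the proof.
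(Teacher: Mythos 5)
Your proposal is correct and follows essentially the same route as the paper's own proof: backward induction establishing that $K_m(r)$ is piecewise linear and concave, with the key cancellation of the belief-update denominators $H$, $J$ against the observation probabilities turning each term into a minimum of affine functions of $r$, from which the claim for $E_m(r)$ follows. Your additional check that $H = sk+(1-s)g+r(q-s)(k-g)$ is affine and positive under $q>s$, $k>g$ is a small rigor bonus the paper leaves implicit, but it does not change the argument.
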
 	
	\begin{proof}
		Proof given in Appendix \ref{A6}.
	\end{proof}
	
	\begin{prop}\label{prop_4}
		$\forall r \in [0,1]$, 	\[E_{m-1}(r) \le E_m(r) \le E_{m+1}(r)\]
	\end{prop}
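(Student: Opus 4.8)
The plan is to prove the single-step monotonicity $E_n(r) \le E_{n+1}(r)$ for every admissible index $n$ and every $r \in [0,1]$; the claimed chain $E_{m-1}(r) \le E_m(r) \le E_{m+1}(r)$ is then just two instances of this single inequality. The proof proceeds by backward induction on the time index, and the key is to establish the monotonicity of $E_m$ jointly with the monotonicity of the value function itself, namely $K_m(r) \le K_{m+1}(r)$ for all $r$. The structural reason this works is visible in the recursion \eqref{eq_dp_2}: the two stopping costs $rD_1$ and $(1-r)D_2$ carry no dependence on the time index $m$, so only the continuation term $E_m(r)$ is horizon-dependent. Hence once I control how $E_m$ moves with $m$, the behaviour of $K_m$ follows by taking a minimum against fixed quantities.

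For the base case I would compare $K_{M-2}$ and $K_{M-1}$. By \eqref{eq_j11} the terminal cost $K_{M-1}(r)=\min\{rD_1,(1-r)D_2\}$ offers only the two stopping options, whereas $K_{M-2}(r)=\min\{rD_1,(1-r)D_2,E_{M-2}(r)\}$ minimizes over the same two options plus the additional continuation option. Since enlarging the feasible set can only lower a minimum, $K_{M-2}(r)\le K_{M-1}(r)$ for all $r$, which anchors the induction.

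For the inductive step, assume $K_{m+1}(r)\le K_{m+2}(r)$ for all $r$. Substituting into the definition \eqref{eq_j14} of $E_m$ and $E_{m+1}$, I would use the crucial fact that the observation probabilities $P(w=1|r),P(w=0|r)$ and the belief-update maps $\Phi'(r,1),\Phi'(r,0)$ are time-homogeneous, i.e.\ identical in the expressions for $E_m$ and $E_{m+1}$. Therefore the two quantities differ only through the value functions being evaluated, and applying the inductive hypothesis term by term at the arguments $\Phi'(r,1)$ and $\Phi'(r,0)$, weighted by the identical nonnegative probabilities $P(w|r)$, gives $E_m(r)\le E_{m+1}(r)$. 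Taking the minimum of each side with the common pair $rD_1,(1-r)D_2$ then yields $K_m(r)\le K_{m+1}(r)$, completing the induction and simultaneously establishing the monotonicity of $E$.

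I expect the main obstacle to be one of careful bookkeeping rather than genuine analytic difficulty: one must verify that the model is stationary, so that the weights and update maps appearing in $E_m$ and $E_{m+1}$ are literally the same, allowing the inductive hypothesis to be applied argument-by-argument without the probabilities interfering. Note that neither the piecewise-linearity nor the concavity of Proposition \ref{prop_3} is needed for this result; the monotonicity rests only on the recursive structure of \eqref{eq_dp_2} together with the time-invariance of the transition and observation kernels.
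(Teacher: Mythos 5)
Your proposal is correct and follows essentially the same route as the paper: a backward induction establishing $K_m(r)\le K_{m+1}(r)$ (anchored by the observation that $K_{M-2}$ minimizes over a superset of the options defining $K_{M-1}$), and then passing this pointwise inequality through the time-homogeneous weights $P(w|r)$ and update maps $\Phi'(r,w)$ to conclude $E_m(r)\le E_{m+1}(r)$. Packaging the two monotonicity claims into a single joint induction, and noting explicitly that Proposition \ref{prop_3} is not needed, are only cosmetic differences from the paper's argument.
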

	
	\begin{proof}
		Proof given in Appendix \ref{A7}.
	\end{proof}
	
	\subsection{Policy Structure}
	The structure of an optimal policy for our POMDP problem is provided in the following theorem.
	\begin{thm}\label{thm_2}
		The optimal policy for exploration in POMDP problem is a threshold policy.  At any time instant $m \in \{0,1,\cdots,M-1\}$, the optimal action is to stop exploration on current relay link and start exploring other relay links from $\mathbb{U}^i$ if $r_{m}\le \alpha_m$, or stop exploration on current relay link and choose it for data transmission if $r_{m}\ge \beta_m$, otherwise continue sending probe packet once more on the current relay link which is being explored to check the link quality if $\alpha_m < r_{m} < \beta_m$. Here $\alpha_m$, $\beta_m$ are appropriate constants.
		Also, we can say for the thresholds: $\cdots \ge \alpha_m \ge \alpha_{m-1} \ge  \cdots \ge \alpha_{1} $ and similarly  $\cdots \le \beta_m \le \beta_{m-1} \le  \cdots \le \beta_1$.
	
	\end{thm}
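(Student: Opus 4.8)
The plan is to treat the Bellman recursion \eqref{eq_dp_2} geometrically: for each fixed $m$, $K_m(r)$ is the lower envelope of the increasing line $f_1(r)=rD_1$, the decreasing line $f_2(r)=(1-r)D_2$, and the curve $E_m(r)$, which Proposition \ref{prop_3} tells us is piecewise linear and concave. I would first record the two boundary facts that anchor everything: since $D_1,D_2,c_\epsilon>0$ and all future costs are nonnegative, $E_m(0)\ge c_\epsilon>0=f_1(0)$ and $E_m(1)\ge c_\epsilon>0=f_2(1)$, so ``do not select'' is strictly optimal at $r=0$ and ``select'' is strictly optimal at $r=1$. The two lines cross once, at $r_0=D_2/(D_1+D_2)$, with $f_1<f_2$ for $r<r_0$. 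The theorem then reduces to showing that the three decision regions $R_1=\{r:K_m(r)=f_1(r)\}$, $R_2=\{r:K_m(r)=f_2(r)\}$, and $R_3=\{r:K_m(r)=E_m(r)\}$ are, respectively, a left interval $[0,\alpha_m]$, a right interval $[\beta_m,1]$, and the middle interval $[\alpha_m,\beta_m]$.

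The core step is to prove that $R_1$ is downward closed and $R_2$ is upward closed. For $R_1$, I would fix $r\in R_1$, so $rD_1\le(1-r)D_2$ and $rD_1\le E_m(r)$, and take any $r'<r$. The inequality $r'D_1\le(1-r')D_2$ is immediate from monotonicity of the two lines. For the other inequality I would invoke concavity: the map $\phi(t)=E_m(t)-tD_1$ is concave (concave minus linear), and it is nonnegative at both endpoints of $[0,r]$ since $\phi(0)=E_m(0)>0$ and $\phi(r)\ge0$; hence $\phi$ dominates the chord joining these endpoints and stays nonnegative on all of $[0,r]$, giving $r'D_1\le E_m(r')$ and so $r'\in R_1$. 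The argument for $R_2$ is the mirror image, using $E_m(1)>0$ and the chord on $[r,1]$. Setting $\alpha_m=\sup R_1$ and $\beta_m=\inf R_2$, the two lines can coincide as the minimizer only at $r_0$, so $R_1$ and $R_2$ overlap in at most a point; when $\alpha_m<\beta_m$ every $r\in(\alpha_m,\beta_m)$ lies in neither $R_1$ nor $R_2$ and therefore in $R_3$, which yields exactly the claimed three-way rule (and when the continue region is empty the policy degenerates to the single threshold $r_0=\alpha_m=\beta_m$).

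For the monotonicity of the thresholds I would exploit Proposition \ref{prop_4}, which gives $E_{m-1}(r)\le E_m(r)$ pointwise. Writing $b_m=\sup\{r:E_m(r)\ge rD_1\}$, concavity of $\phi$ and the boundary value at $0$ make $\{r:E_m(r)\ge rD_1\}$ the single interval $[0,b_m]$, whence $\alpha_m=\min(r_0,b_m)$. Because $E_{m-1}\le E_m$ shifts the curve downward, $\{r:E_{m-1}(r)\ge rD_1\}\subseteq\{r:E_m(r)\ge rD_1\}$, so $b_{m-1}\le b_m$ and hence $\alpha_{m-1}\le\alpha_m$. Symmetrically, with $c_m=\inf\{r:E_m(r)\ge(1-r)D_2\}$ and $\beta_m=\max(r_0,c_m)$, the inclusion runs the other way, giving $c_m\le c_{m-1}$ and therefore $\beta_m\le\beta_{m-1}$. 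These are precisely the monotone chains in the statement.

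The main obstacle is the fixed-$m$ interval structure rather than the monotonicity: the delicate point is ruling out a split or non-interval continue region, and the concavity-plus-chord argument is exactly what forces $R_1$ and $R_2$ to be genuine end intervals, leaving no room for $R_3$ to break into pieces. I would also need to verify that whenever $R_3$ is nonempty one automatically has $\alpha_m\le r_0\le\beta_m$ (so that $\alpha_m\le\beta_m$); this follows because the alternative orderings $r_0<b_m$ or $c_m<r_0$ force, at an interior point, a chain of strict inequalities among $E_m$, $f_1$, and $f_2$ that is self-contradictory. Everything else is routine once Propositions \ref{prop_3} and \ref{prop_4} are in hand.
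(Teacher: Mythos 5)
Your proposal is correct, and at its core it follows the same strategy as the paper's own proof: use Proposition \ref{prop_3} (piecewise linearity and concavity of $E_m$) to get the three-region threshold structure at each fixed $m$, and Proposition \ref{prop_4} (monotonicity of $E_m$ in $m$) to get the nesting $\alpha_{m-1}\le\alpha_m$ and $\beta_m\le\beta_{m-1}$. The difference is in execution, and yours is tighter. The paper argues by counting intersection points of the concave curve $E_m$ with the lower envelope $\min\{rD_1,(1-r)D_2\}$, splitting into the cases $E_{M-2}(r')<\frac{D_1D_2}{D_1+D_2}$ for some $r'$ versus $E_{M-2}(r')>\frac{D_1D_2}{D_1+D_2}$ for all $r'$, and leaning on a figure; the monotonicity of the thresholds is then argued only for $m=M-2$ versus $m=M-1$ and extended ``similarly.'' Your chord argument (concavity of $E_m(t)-tD_1$ plus nonnegativity at the endpoints of $[0,r]$ forces nonnegativity on all of $[0,r]$) proves downward/upward closedness of the stopping regions directly, subsumes both of the paper's cases in one statement, and yields the explicit formulas $\alpha_m=\min(r_0,b_m)$, $\beta_m=\max(r_0,c_m)$ with $r_0=\frac{D_2}{D_1+D_2}$; these formulas make $\alpha_m\le r_0\le\beta_m$ automatic (so your final ``delicate point'' is actually a non-issue under your own definitions), and they turn the threshold monotonicity into a one-line super-level-set inclusion from $E_{m-1}\le E_m$, valid uniformly in $m$ rather than by example-plus-analogy. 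In short: same roadmap, but your write-up supplies the rigor the paper's appendix only sketches.
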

	\begin{proof}
		Proof given in Appendix \ref{A8}.	
	\end{proof}
	
	 Using above result, we can say as $m \rightarrow \infty$, $\alpha_m$ and $\beta_m$ converges to some scalar $\overline{\alpha}$ and $\overline{\beta}$ respectively. This is because $\beta_m$ is bounded below and $\alpha_m$ is bounded above. Hence, for very large horizon length $M$, the optimal policy can be approximated by a stationary threshold policy with a time-invariant threshold $\overline{\alpha}$ and $\overline{\beta}$. In this case we can say that at time instant $m$, if  $r_{m}\le \overline{\alpha}$  then stop exploration on current link and start exploring other links from $\mathbb{U}^i$, if $r_{m}\ge \overline{\beta}$ then stop exploration on current link and choose it for data transmission, otherwise  if $\overline{\alpha} < r_{m} < \overline{\beta}$ then continue sending probe packet once more on the current link which is being explored to check the link quality.

	It is easy to check that $\Phi(r,w)$ is a 
	non-decreasing function in $r$. Let us denote $r_0$ as the prior belief. We can say that, when prior belief satisfies $r_0>\Phi(r_0,w)=r_1$, then  $r_2=\Phi(r_1,0) < \Phi(r_0,0)=r_1$ and $r_2=\Phi(r_1,1) < \Phi(r_0,1)=r_1$. Proceeding in this way, we can show that $r_m$ strictly decreases with $m$ whenever we observe several successive ACK failures/successes. Similarly, when the prior belief satisfies $r_0<\Phi(r_0,w)=r_1$, then we can say that $r_2=\Phi(r_1,0) > \Phi(r_0,0)=r_1$ and $r_2=\Phi(r_1,1) > \Phi(r_0,1)=r_1$. Proceeding in this way, we can show that $r_m$ strictly increases with $m$ whenever we observe several successive ACK failures/successes. For getting $x$ successive ACK failures, we can define recursively a probability $\pi_x$ as: $\pi_1=\Phi(r_0,\overline{A})$,  $\pi_2=\Phi(\pi_1,\overline{A})$, $\cdots$, $\pi_x=\Phi(\pi_{x-1},\overline{A})$ with $\pi_0=r_0$.
	Similarly, for getting $x$ successive ACK successes, we can define recursively a probability $\pi_x'$ as: $\pi_1'=\Phi(r_0,A)$,  $\pi_2'=\Phi(\pi_1',A)$, $\cdots$, $\pi_x'=\Phi(\pi_{x-1}',A)$ with $\pi'_0=r_0$. Let $c$ and $d$ be the smallest integer such that $\pi_c \le \overline{\alpha}$ and $\pi_{d}' \ge \overline{\beta}$ respectively.  We can further simplify the stationary threshold policy as follows.
	\begin{cor}
		Using theorem \ref{thm_2}, we can simplify the optimal policy further as follows. Let $c$ and $d$ be the smallest integer such that $\pi_c \le \overline{\alpha}$ and $\pi_d' \ge \overline{\beta}$ respectively. If there are $c$ successive ACK failures, stop exploring on current relay link being explored and start exploring some other relay link. If there are $d$ successive ACK successes, stop exploring on current relay link being explored and choose it for data transmission. Otherwise, decision cannot be made and more probe packets are needed to be sent to learn the link quality.
	\end{cor}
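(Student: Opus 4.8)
The plan is to read the stationary threshold policy of Theorem~\ref{thm_2} (in the $M\to\infty$ limit, with time-invariant thresholds $\overline{\alpha}\le\overline{\beta}$) as a rule on the belief $r_m$, and then translate its two threshold crossings into counts of consecutive ACK observations by exploiting monotonicity of the belief update. Throughout I would write $f(r):=\Phi(r,\overline{A})$ for the failure update and $h(r):=\Phi(r,A)$ for the success update, so that $\pi_x=f^{(x)}(r_0)$ and $\pi_x'=h^{(x)}(r_0)$ are exactly the $x$-fold iterates from the prior $r_0$.

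First I would record the two monotonicity facts that drive everything. Writing $P_g(r)=s+(q-s)r$ for the predicted good-probability, both $f$ and $h$ have the form $1/(1+\tfrac{1-P_g}{P_g}\rho)$, with $\rho=g/k<1$ for $h$ and $\rho=(1-g)/(1-k)$ for $f$; since $q>s$ makes $P_g$ increasing in $r$, and $k>g$ orders the two values of $\rho$, I obtain (i) $\Phi(\cdot,w)$ is non-decreasing in $r$ (the fact already invoked in the text) and (ii) $h(r)\ge f(r)$ pointwise, with $h(r)\ge r\ge f(r)$ on the continuation region. These are short algebraic checks.

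Next comes the monotone-iteration step, which is the structural heart of the corollary. Because $f$ is non-decreasing and $f(r)\le r$ above its unique fixed point $r^-$, the elementary fact ``if $g$ is non-decreasing and $g(r_0)\le r_0$ then $g^{(n)}(r_0)$ is non-increasing'' gives $\pi_0\ge\pi_1\ge\pi_2\ge\cdots\downarrow r^-$; symmetrically $\pi_0'\le\pi_1'\le\cdots\uparrow r^+$, where $r^+$ is the fixed point of $h$ and $r^-<r^+$ since $h(r^-)>f(r^-)=r^-$. This is precisely the monotone behaviour under successive failures/successes sketched before the corollary, and under the construction condition $r^-\le\overline{\alpha}\le\overline{\beta}\le r^+$ it guarantees that $c=\min\{x:\pi_x\le\overline{\alpha}\}$ and $d=\min\{x:\pi_x'\ge\overline{\beta}\}$ are finite. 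Equivalence then follows: a run of $c$ consecutive failures drives $r_m$ down to $\pi_c\le\overline{\alpha}$, triggering the reject action; a run of $d$ consecutive successes raises $r_m$ to $\pi_d'\ge\overline{\beta}$, triggering the select action; otherwise the belief has not reached a boundary, so the policy continues.

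The hard part will be making this equivalence airtight for interleaved observation sequences, where a run can be reset by an opposite ACK. Here I would prove the extremal property that, among all ACK sequences of a fixed length, the all-failure sequence yields the minimal belief and the all-success sequence the maximal one: using $f(r)\le h(r)$ together with monotonicity of $\Phi$ in $r$, an interchange argument shows that replacing any success by a failure can only lower the resulting belief, so $\pi_x$ and $\pi_x'$ are the pointwise extremes. Combined with $h(r)\ge r\ge f(r)$ on $(\overline{\alpha},\overline{\beta})$ — so that a success never pushes the belief below $\overline{\alpha}$ and a failure never pushes it above $\overline{\beta}$ — this shows the belief can exit only through $\overline{\alpha}$ after $c$ failures or through $\overline{\beta}$ after $d$ successes, which is exactly the count-based rule. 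I would be careful to flag that the one place where the reduction is a clean simplification rather than literal equality is the dependence of the crossing count on the post-reset starting belief rather than on $r_0$, and to state explicitly the conditions $r^-\le\overline{\alpha}$ and $\overline{\beta}\le r^+$ under which $c$ and $d$ exist.
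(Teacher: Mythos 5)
Your proposal is correct and, at its core, travels the same road the paper does: the paper never gives a standalone proof of this corollary at all, but instead presents it as an immediate consequence of the pre-corollary discussion (monotonicity of $\Phi(\cdot,w)$ in $r$, the monotone evolution of the belief under repeated identical observations, and the recursively defined iterates $\pi_x$, $\pi_x'$) together with the stationary thresholds $\overline{\alpha}$, $\overline{\beta}$ obtained from Theorem \ref{thm_2}. Where you genuinely go beyond the paper is in making rigorous exactly the points it leaves implicit: (i) you identify the fixed points $r^-$ of the failure update and $r^+$ of the success update, order them via $h>f$ pointwise (which follows from $k>g$), and thereby state explicit conditions $r^-\le\overline{\alpha}$ and $\overline{\beta}\le r^+$ under which $c$ and $d$ are finite --- the paper simply assumes such smallest integers exist (and note that at the boundary case $r^-=\overline{\alpha}$ you would need strict inequality, since the iterates converge without necessarily attaining the limit); (ii) your interchange/extremal lemma --- all-failure sequences minimize and all-success sequences maximize the belief over all observation strings of a given length --- is the piece of content actually needed to relate mixed ACK sequences to the pure runs $\pi_x,\pi_x'$, and it appears nowhere in the paper; and (iii) you correctly flag that the count-based rule is a clean simplification rather than a literal equivalence, because after a run is reset by an opposite ACK the next run starts from the post-reset belief rather than from $r_0$, so the threshold policy can stop on mixed sequences that the ``$c$ successive failures'' rule would not flag (and vice versa the crossing count from a reset point need not equal $c$). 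The paper's treatment buys brevity at the cost of glossing over exactly these issues; your plan is the more defensible version of the same argument, and the caveat in (iii) is a real limitation of the corollary as stated, not a defect of your proof.
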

			
	\section{Experiments and Results} \label{experiment}	
	\begin{figure*}[h!]
		\centering
		\begin{minipage}[b]{.22\textwidth}
			\includegraphics[width=1\textwidth]{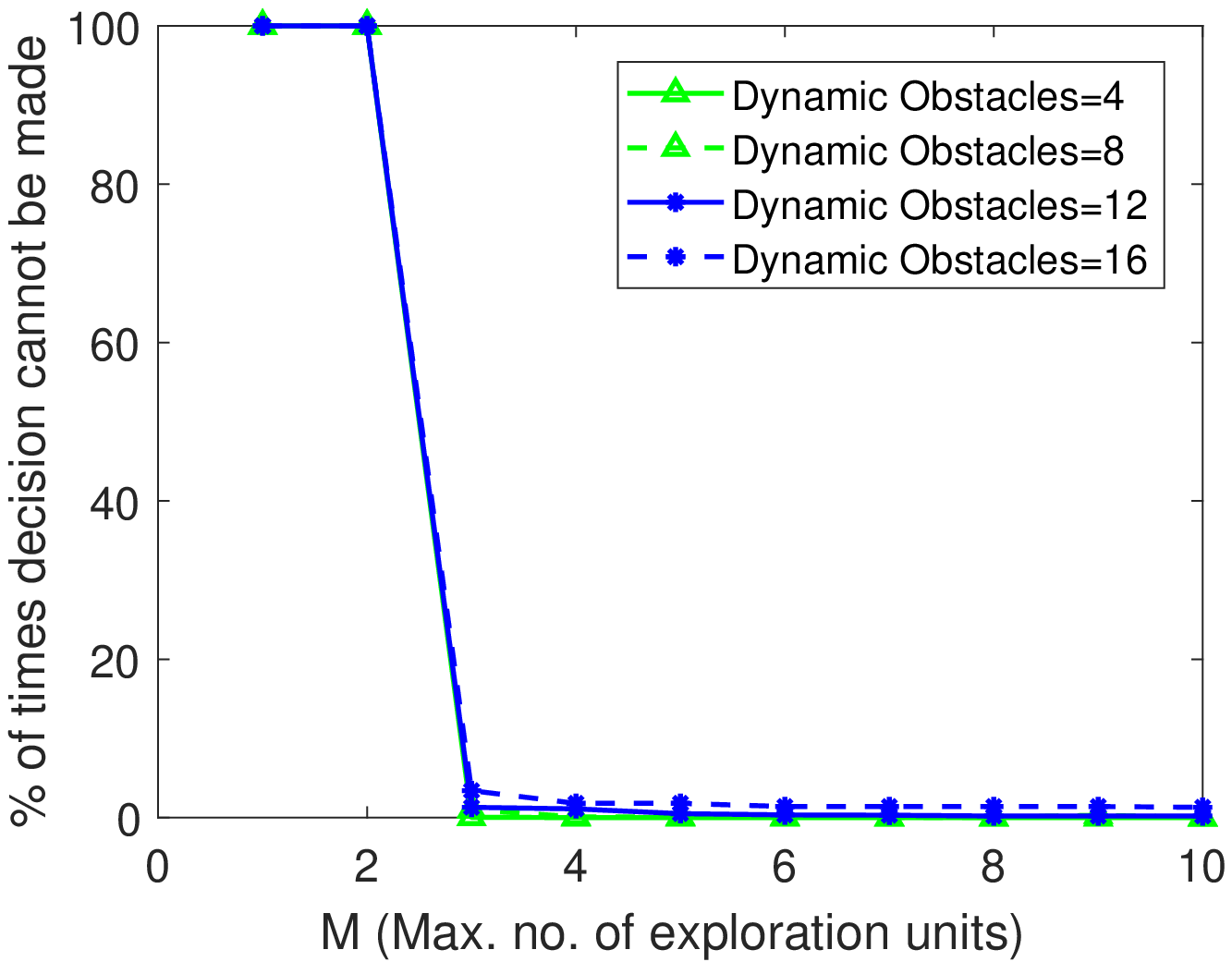}
			\caption{$M$ vs \% of times no decision can be made}
			\label{fig-1}
		\end{minipage}\qquad
	\begin{minipage}[b]{.21\textwidth}
		\includegraphics[width=1\textwidth]{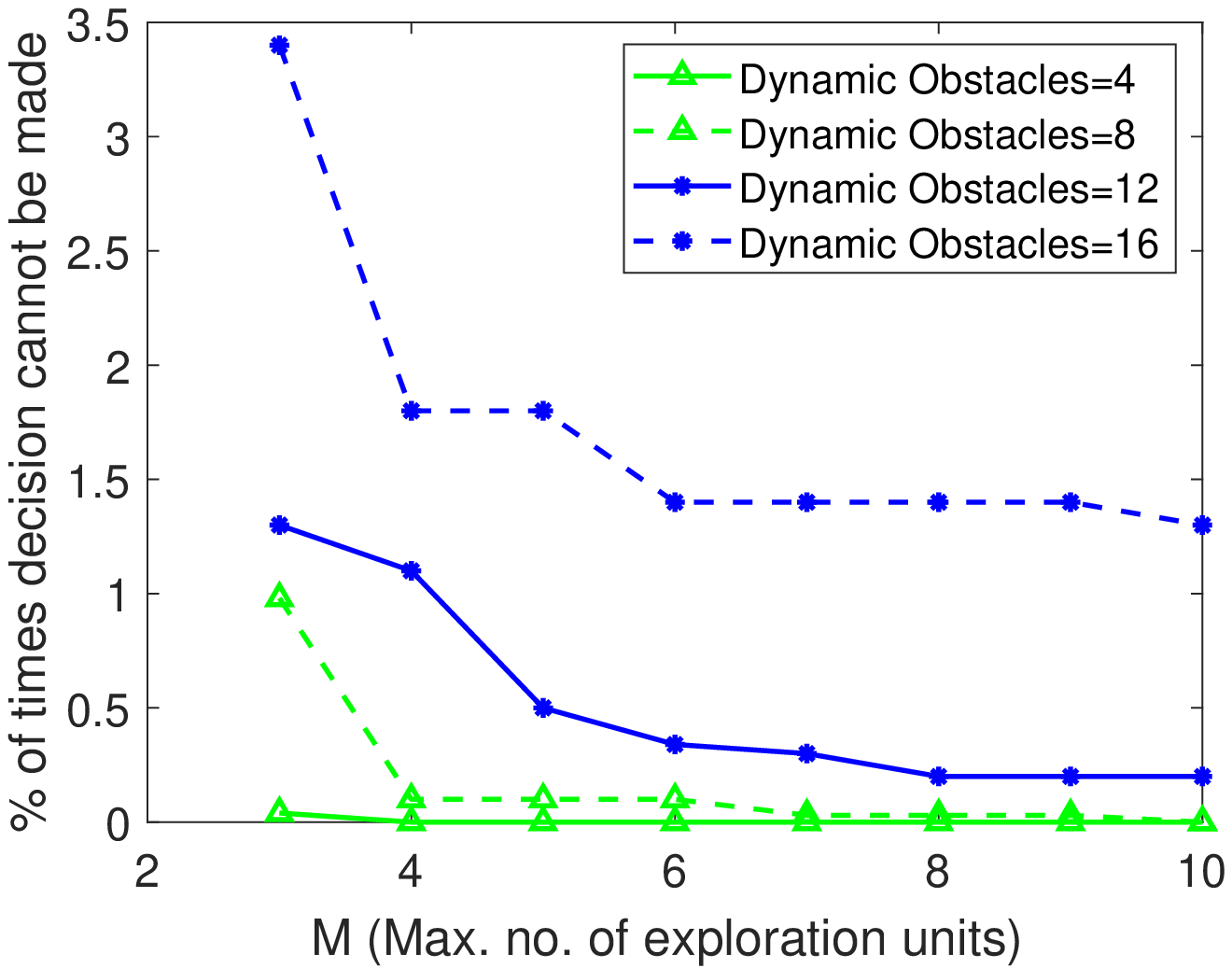}
		\caption{$M$ vs \% of times no decision can be made when $M\ge3$}
		\label{fig-1_2}
	\end{minipage}\qquad
		\begin{minipage}[b]{.21\textwidth}
			\includegraphics[width=1\textwidth]{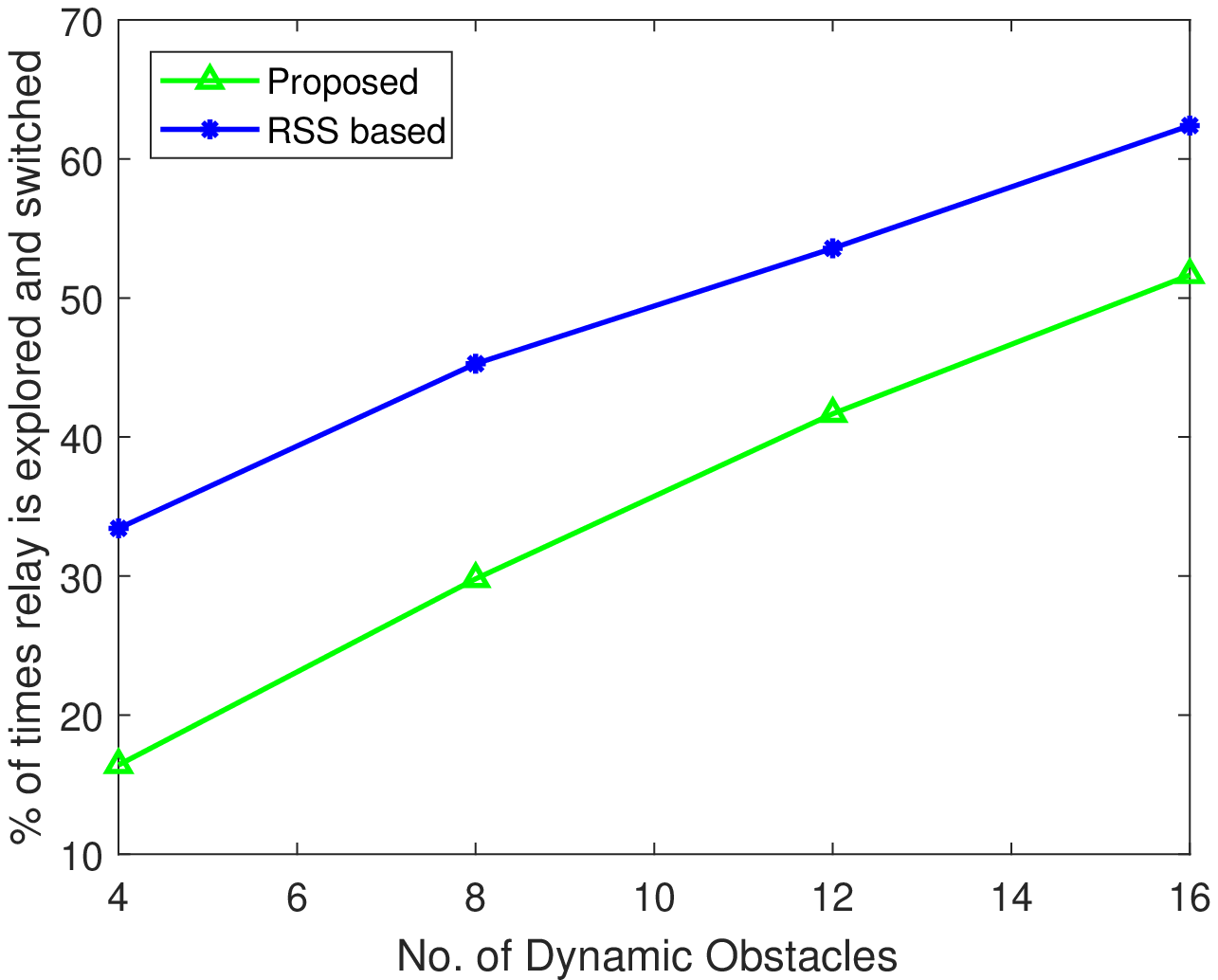}
			\caption{No. of times relay are explored and switched with dynamic obstacles.}
			\label{fig-2}
		\end{minipage}\qquad
		\begin{minipage}[b]{.24\textwidth}
			\includegraphics[width=1\textwidth]{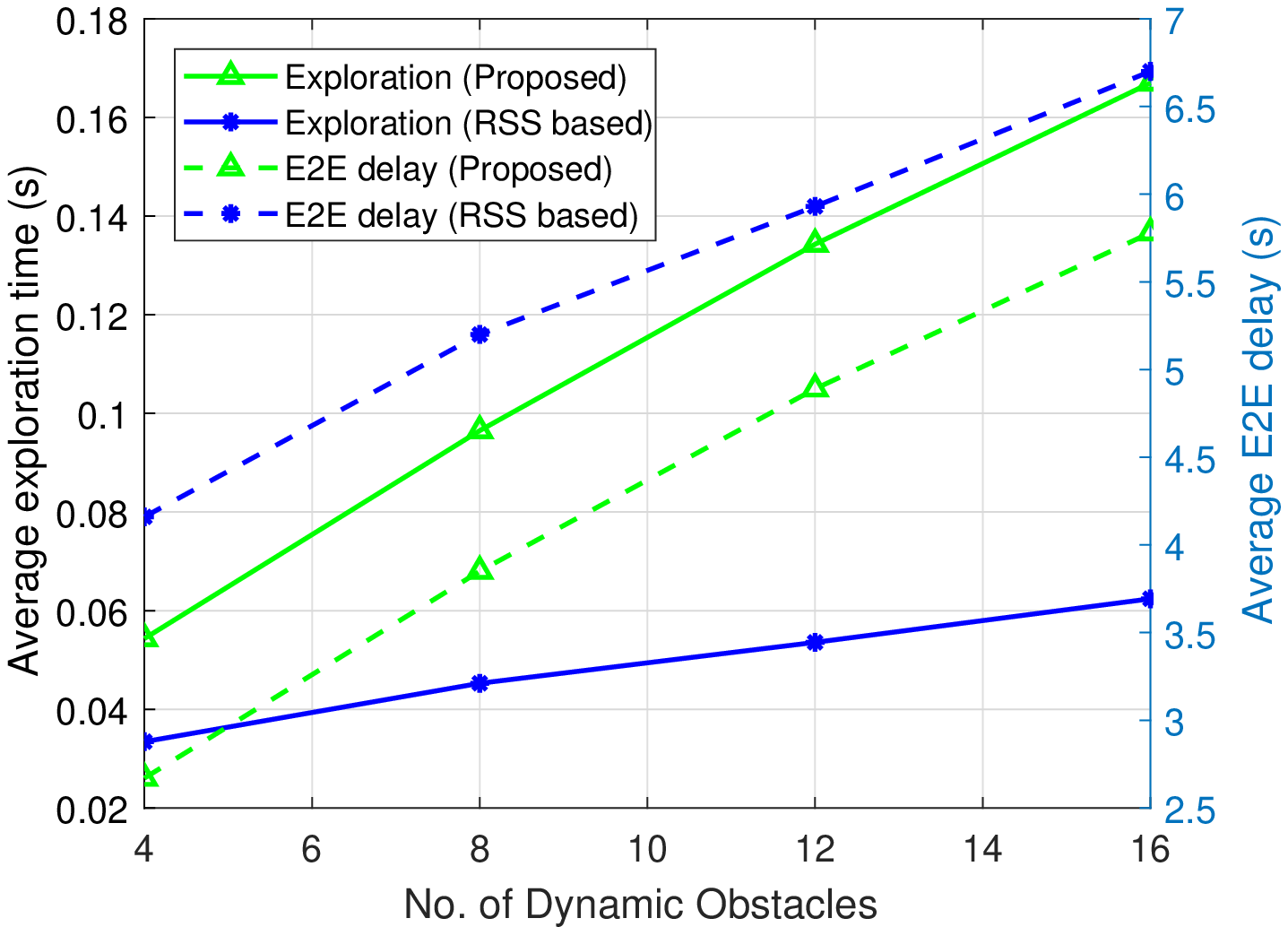}
			\caption{Trade-off for average exploration time and average E2E delay.}
			\label{fig-4}
		\end{minipage}
	\end{figure*}
	
	Service region of dimension $100~m \times 100~m$ is divided into grid zones each of dimension $10~m\times10~m$. We have taken $\delta=100~ms$ and $\epsilon=1~ms$.
	UEs are using directional transmitter and receiver antennas for $60~GHz$ frequency with $G_r=G_t=6~dB$. LOS path loss exponent is $2.5$ and zero mean  log-normal shadowing random variable with standard deviation $3.5$ \cite{7974772,7109864}. Thermal noise density is $-174~dBm/Hz$ and devices are using $24~dBm$ transmit power.  Capacity of each link $(i,j)$ is $B\log_2(1+S_{ij})~bits/sec$, where $B=20~MHz$ \cite{al2014path} is bandwidth and $S_{ij}$ is the received signal to noise ratio. The packet length is fixed and of size $65535~bytes$. A source UE in a grid zone $i$ can transmit data to a maximum of $24$ grids surrounding it ($|\mathbb{U}^i| \le 24$).  There are maximum  of $16$ static and $D$ dynamic obstacles present in these $24$ grids surrounding the source UE, where $D\in\{0,4,8,12,16\}$. Each static obstacle is placed uniformly in the service region. Each dynamic obstacles is moving randomly and independently of each other and following a simple blockage model such that with probability $0.5$ it will block a given link otherwise it will not block the link.  A single source-destination pair is assumed for simplicity and all other devices in a given zone may act as relay. We have written our own C++ custom code and run them on a \texttt{GNU} $4.8$ compiler on Intel core $i7$ machine and average of 1000 runs are taken. The average exploration time and E2E delay for sending $100$ packets are the main parameters considered. We have compared our results with the received signal strength (\texttt{RSS Based}) approach which is most commonly used for relay selection during exploration after transmitter and receiver beams are perfectly aligned.
	
	Value of $M$ is derived experimentally in figure \ref{fig-1}. This value should not be too small, otherwise there won't be sufficient exploration to lead to a decision. Also $M$ should not be too big, otherwise a lot of time would be wasted in exploration. Figure \ref{fig-1} shows the percentage of time the decision could not be made on varying  values of $M$ for different $D$. It is clear that for almost all values of $D$, decision cannot be made when $M$ is $1$ and $2$. For $D=16$ and $M=3$, for around $3.5\%$ of cases decision cannot be made a shown in figure \ref{fig-1_2}. This figure shows that with dynamic obstacles, no. of times the decision of relay selection cannot be made, also increases. We have chosen $M=4$ for performing further simulations.  
	
	Figure \ref{fig-2} shows the effect of dynamic obstacles on percentage of times new relays are explored and switched. With dynamic obstacles, the probability of link breakage increases and thus number of times new relays are explored and switched also increases. Our approach outperforms the RSS based approach because it  chooses the link by learning dynamic obstacle's presence while RSS based approach greedily chooses a link based on best RSS values which might have higher chance of blockage.	Figure \ref{fig-4} shows following results: first with dynamic obstacles, average E2E delay is increased which follows from previous result of figure \ref{fig-2}. Since higher number of dynamic obstacles will have higher chance of breaking a link and thus causes increase in E2E delay. Similarly, following the argument of previous result of figure \ref{fig-2}, our approach outperforms RSS based approach. Secondly, with dynamic obstacles, average exploration time is increased because higher link breakage with dynamic obstacles would cause more exploration and switching of relays. Here RSS based approach performs better since it always takes one time unit of exploration, whereas our approach takes additional exploration time units to learn the quality of link considering dynamic obstacles. Thirdly, it can be seen from above two results that, average exploration time is higher in our approach at the cost of reducing the cost of average E2E delay significantly compared to the RSS based approach. This describes the trade-off of additional exploration time  over the average E2E delay.

	\section{Conclusion}  \label{conclusion}
	The problem of selecting a given relay during exploration is investigated taking into account presence of dynamic obstacles. We have modeled this problem as a finite horizon POMDP	framework at each UE. Using this model, an optimal threshold policy is derived for each UE which is then simplified to a  stationary policy. This policy governs the source UE to  take decisions based on successive	ACK failures or success on the current relay link under exploration. Through simulations, the trade-off between average exploration time and E2E delay is shown, where our approach captures the effects of dynamic obstacles significantly compared to the RSS based approach.
	
	\bibliography{ref} 
	\bibliographystyle{ieeetr}

	\newpage
	
	\appendices

	\section{Proof of proposition \ref{prop_3}}
	\label{A6}
	
    We will prove this by first showing that $K_m(r)$ is piece-wise linear and concave for each $m$ using induction. Then we prove our proposition.
	For time instant $(M-1)$, we have, \[K_{M-1}(r)=\min\{rD_1,(1-r)D_2\}\] which is piece-wise linear and concave. 
	For time instant $(M-2)$, we have,
	\[K_{M-2}(r)=\min\{rD_1,(1-r)D_2,E_{M-2}(r)\}\] where,
	\begin{multline*}
		E_{M-2}(r)=c_\epsilon + \min\big\{\big(q r+ s(1-r)\big)kD_1,\\\big(r(1-q)+ (1-s)(1-r)\big)gD_2\big\}+ \\\min\big\{\big(q r+ s(1-r)\big)\big(1-k\big)D_1,\\\big(r(1-q)+ (1-s)(1-r)\big)\big(1-g\big)D_2\}.
	\end{multline*}
	 This overall equation is also piece-wise linear and concave.  
	
	Assuming $K_{m+1}(r)$ is piece-wise linear and concave in $r$, we can say that for some suitable scalars, $\eta_1,\eta_2, \cdots, \eta_n$ and $\beta_1,\beta_2, \cdots, \beta_n$, $K_{m+1}(r)$ can be written as:
	\begin{equation}\label{eq_l+1_linear_2}
	K_{m+1}(r)=\min\{\eta_1+\beta_1r,\eta_2+\beta_2r,\cdots,\eta_n+\beta_nr\}.
	\end{equation}	
	We can write,
	\[K_{m}(r)=\min\{rD_1,(1-r)D_2,E_m(r)\}.\] Expanding above using equation \eqref{eq_j15}, we get:
	\scriptsize
	\begin{multline}\label{eq_J_l_2}
	K_{m}(r)=\min\bigg\{rD_1,\big(1-r\big)D_2,c_\epsilon+\\
	\bigg(\big(rq+(1-r)s\big)k+\big(r(1-q)+(1-r)(1-s)\big)g\bigg)\times \\K_{m+1}\bigg(\frac{\big(rq+(1-r)s\big)k}{\big(rq+(1-r)s\big)k+\big(r(1-q)+(1-r)(1-s)\big)g}\bigg) +\\ \bigg(1-\big((rq+(1-r)s)k+(r(1-q)+(1-r)(1-s))g\big)\bigg)\times \\K_{m+1}\bigg(\frac{\big(rq+(1-r)s\big)\big(1-k\big)}{\big(1-((rq+(1-r)s)k+(r(1-q)+(1-r)(1-s))g)\big)}\bigg)\bigg\}.
	\end{multline}
	\normalsize
	Let us substitute $Y$ for $((rq+(1-r)s)k+(r(1-q)+(1-r)(1-s))g)$ to simplify the calculations. Now using equation \eqref{eq_l+1_linear_2}, the equation \eqref{eq_J_l_2} is reduced as:
	\begin{multline}\label{eq_J_l_expand_2}
	K_{m}(r)$=$\min\bigg\{rD_1,(1-r)D_2,c_\epsilon+\\
	Y\min\big\{\eta_1+\beta_1\frac{(qr+ s(1-r))k}{Y},\\ \eta_2+\beta_2\frac{(qr+ s(1-r))k}{Y},\cdots,\eta_n+ \beta_n\frac{(qr+ s(1-r))k}{Y}\big\}+\\(1-Y)\min\big\{\eta_1+\beta_1\frac{(qr+ s(1-r)) (1-k)}{(1-Y)}, \eta_2+\\ \beta_2\frac{(qr+ s(1-r)) (1-k)}{(1-Y)},\cdots,\\ \eta_n+ \beta_n\frac{(qr+ s(1-r)) (1-k)}{(1-Y)}\big\}\bigg\}
	\end{multline}
	We can further reduce above equation as:
	\begin{multline}\label{eq_J_l_linear_2}
	K_{m}(r)$=$\min\big\{rD_1,(1-r)D_2,c_\epsilon+
	\min\{(Y)\eta_1+\\ \beta_1(qr+ s(1-r))k, (Y)\eta_2+ \beta_2(qr+ s(1-r))k,\cdots,\\(Y)\eta_n+\beta_n(qr+ s(1-r))k\}+ \min\{(1-Y)\eta_1+\\ \beta_1(qr+ s(1-r)) (1-k),(1-Y)\eta_2+ \beta_2(qr+ s(1-r)) (1-k),\\\cdots,(1-Y)\eta_n+ \beta_n(qr+ s(1-r)) (1-k)\}\big\}
	\end{multline}
	This is again piece-wise linear and concave in $r$. Thus the induction is complete. 
	
	Now we will show that $E_m(r)$ is also piece-wise linear and concave in $r$:
	\begin{multline} \label{eq_A_l_proof_2}
	E_m(r)=c_\epsilon+(Y)K_{m+1}(\Phi'(r,1))+
	(1-Y)K_{m+1}(\Phi'(r,0))
	\end{multline}
	The first term $c_\epsilon$ is constant. For the next two terms $(Y)K_{m+1}(\Phi'(r,1))$ and $
	(1-Y)K_{m+1}(\Phi'(r,0))$, by expanding them using equation \eqref{eq_l+1_linear_2}, we get,
	\begin{multline}
	(Y)K_{m+1}(\Phi'(r,1))+
	(1-Y)K_{m+1}(\Phi'(r,0))=\\
	(Y)\min\bigg\{\eta_1+\beta_1\frac{(qr+ s(1-r))k}{Y}, \eta_2+\beta_2\frac{(qr+ s(1-r))k}{Y},\\ \cdots,\eta_n+ \beta_n\frac{(qr+ s (1-r))k}{Y}\bigg\}+	 (1-Y)\min\bigg\{\eta_1+\\ \beta_1\frac{(qr+ s(1-r))(1-k)}{1-Y}, \eta_2+\beta_2\frac{(qr+ s (1-r))(1-k)}{1-Y},\\ \cdots,\eta_n+ \beta_n\frac{(q r+s(1-r))(1-k)}{1-Y}\bigg\}
	\end{multline}
	We can reduce above to:
	\begin{multline}
		(Y)K_{m+1}(\Phi'(r,1))+
		(1-Y)K_{m+1}(\Phi'(r,0))=\\
		\min\big\{\eta_1(Y)+ \beta_1(qr+ s(1-r))k, \eta_2(Y)+\beta_2(qr+ s(1-r))k,\cdots,\\ \eta_n(Y)+\beta_n(qr+ s(1-r)) (1-k)\big\}+		 \min\big\{\eta_1(1-Y)+ \\\beta_1(qr+ s(1-r))(1-k), \eta_2(1-Y)+\beta_2(qr+ s(1-r)) (1-k),\cdots,\\ \eta_n(1-Y)+\beta_n(qr+ s(1-r)) (1-k)\big\}.
	\end{multline}
	Since minimum of finite number of concave function is concave, $E_m(r)$ is piece-wise linear and concave in $r$.
	
	\section{Proof of proposition \ref{prop_4}}
	\label{A7}
	
	
	We will first prove  $K_m(r)\ge K_{m+1}(r)$, then we will use this to prove $E_m(r)\ge E_{m+1}(r)$. First we start for base case $m=M-1$ and the first term in recursion $m=M-2$: \[K_{M-1}(r)=\min\{rD_1,(1-r)D_2\}\] and \[K_{M-2}(r)=\min\{rD_1,(1-r) D_2,E_{M-2}(r)\}\] respectively. We can easily see that $K_{M-1}(r)\ge K_{M-2}(r)$. We now prove it for first two terms of the recursion  $K_{M-2}(r)$ and $K_{M-3}(r)$. Let's denote $Y=(rq+(1-r)s)k+(r(1-q)+(1-r)(1-s))g$, we can write $K_{M-3}(r)$ as:
	\begin{align*}
	K_{M-3}(r)&= \min \{rD_1,(1-r)D_2, c_\epsilon +  \notag 
	(Y)K_{M-2}(\Phi'(r,1))
	+ \\ &\notag 
	\quad\quad
	(1-Y)K_{M-2}(\Phi'(r,0))\} \\
	&\le \min\{rD_1,(1-r)D_2, c_\epsilon +   \notag 
	(Y)K_{M-1}(\Phi'(r,1))+ \\ &\notag 
	\quad\quad
	(1-Y)K_{M-1}(\Phi'(r,0))\} \\
	&=K_{M-2}(r)
	\end{align*}
	Hence $K_{M-2}(r)\ge K_{M-3}(r)$. Similarly it proceeds for other $m$ and hence $K_{m+1}(r)\ge K_{m}(r)$. Now let us see this for $E_m(r)$ using previous proof for $K_{m}(r)$:
	\begin{align}
	E_{m}(r)&= c_\epsilon +
	(Y)K_{m+1}(\Phi'(r,0))
	 +(1-Y)K_{m+1}(\Phi'(r,0))\\
	&\le c_\epsilon +
	(Y)K_{m+2}(\Phi'(r,0))+
	 ((1-Y)K_{m+2}(\Phi'(r,0)) \\
	&=E_{m+1}(r)
	\end{align}
	Hence  $E_{m+1}(r) \ge E_{m}(r)$.

	\section{Proof of theorem \ref{thm_2}}
	\label{A8}
	
	For the last time period $M-1$, there exists $\rho=\frac{D_2}{D_1+D_2}$ such that stop exploration on current link and start exploring other links from $\mathbb{U}^i$ if $r<\rho$, otherwise stop exploration on current link and choose it for data transmission if $r\ge \rho$. At $\rho$, $K_{M-1}(\rho)$ attains it maximum value of $\frac{D_2D_1}{D_1+D_2}$.
	
	For general time instants $m$, we can say that $E_m(0)>c_{\epsilon}$ and $E_m(1)>c_{\epsilon}$. Using this fact, proposition \ref{prop_3} and proposition \ref{prop_4}, we can say that  if for some $r'$, $E_{M-2}(r') < \frac{D_2D_1}{D_1+D_2}$, then $E_{M-2}(r)$ and subsequent $E_m(r)$ will intersect $\min\{rD_1,(1-r)D_2\}$ at two points. Hence we can say that for general time instant $m$ we will get an optimal policy as: stop exploration on current link and start exploring other links from $\mathbb{U}^i$ if $r_{m}\le \alpha_m$, or stop exploration on current link and choose it for data transmission if $r_{m}\ge \beta_m$, otherwise continue sending probe packet once more on the current link which is being explored to check the link quality if $\alpha_m < r_{m} < \beta_m$. Here $\alpha_m$ and $\beta_m$ are found by satisfying the corresponding equations: $\alpha_m D_1=E_m(\alpha_m)$ and $(1-\beta_m)D_2=E_m(\beta_m)$. 
	
	If  for all $r'$, $E_{M-2}(r') > \frac{D_2D_1}{D_1+D_2}$, then terms $rD_1$ and $(1-r)D_2$ will contribute to the single threshold $\rho$ which is trivial for time instant $M-2$ due to proposition \ref{prop_3}. In this case, using  proposition \ref{prop_3} and proposition \ref{prop_4} one can have either single or two thresholds for other time instants $m$ depending upon whether $E_m(r)$ intersects $\min\{rD_1,(1-r)D_2\}$ at two or no points as shown in figure below.
	
	\begin{figure}[h!]
		\centering
		\includegraphics[width=0.5\textwidth]{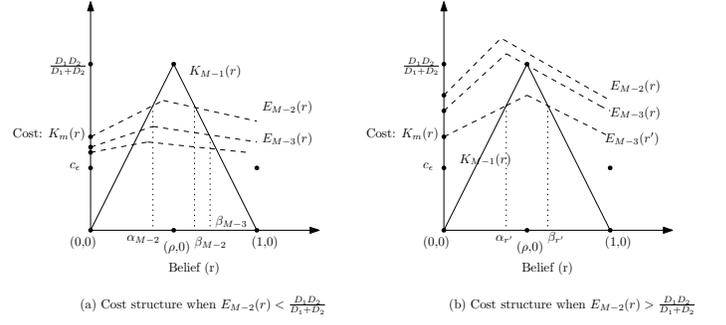}
		\caption{Cost structure of the problem for two cases.}
		\label{fig_proof}
	\end{figure}
	
	For the second part, using proposition \ref{prop_4} and if $E_{M-2}(\rho) < \frac{D_2D_1}{D_1+D_2}$,  we can say that $\alpha_{M-2} < \rho < \beta_{M-2}$. This is easy to see because $K_{M-1}(r)$ first increases till $\rho$ and then decreases. Also,
	\[K_{M-1}(r) \ge K_{M-2}(r)=\min\{K_{M-1}(r),E_{M-2}(r)\}\]
	and  $\alpha_{M-1}=\beta_{M-1}=\rho$. Hence, $\alpha_{M-2} \le \alpha_{M-1}$ and $\beta_{M-1} \le \beta_{M-2}$.
	Similarly, using proposition \ref{prop_4} we can say this for other instants $m$  that $\alpha_{m} \le \alpha_{m+1}$ and $\beta_{m} \ge \beta_{m+1}$. Hence we can say that with respect to $m$:  \[\cdots \ge \alpha_m \ge \alpha_{m-1} \ge  \cdots \ge \alpha_{1} \] and similarly  \[\cdots \le \beta_m \le \beta_{m-1} \le  \cdots \le \beta_1.\]


\end{document}